\newtcolorbox[auto counter]{mybox}[2][]{float,title={\textcolor{black}{Algorithm~\thetcbcounter: #2}},#1, colframe=white!70!gray}
\newtheorem{corollary}{Corollary}
\newtheorem{theorem}{Theorem}
\newtheorem{remark}{Remark}
\def\trace {{\rm trace}}
\newcommand{\bmka}{0}
\title{Thinning a Wishart Random Matrix}
\author[1]{Ameer Dharamshi}
\author[2]{Anna Neufeld}
\author[3]{Lucy L. Gao}
\author[1,4]{Daniela Witten}
\author[5]{Jacob Bien}
\affil[1]{Department of Biostatistics, University of Washington}
\affil[2]{Department of Mathematics and Statistics, Williams College}
\affil[3]{Department of Statistics, University of British Columbia}
\affil[4]{Department of Statistics, University of Washington}
\affil[5]{Department of Data Sciences and Operations, University of Southern California}
\begin{document}

\maketitle

\begin{abstract}
    Recent work has explored \emph{data thinning}, a generalization of sample splitting that involves decomposing a (possibly matrix-valued) random variable into independent components. In the special case of a $n \times p$ random matrix with independent and identically distributed $N_p(\mu, \Sigma)$ rows, \citet{dharamshi2024decomposing} provides a comprehensive analysis of the settings in which thinning is or is not possible: briefly, if $\Sigma$ is unknown, then one can thin provided that $n>1$.   However, in some situations a data analyst may not have direct access to the data itself.  For example, to preserve individuals' privacy, a data bank may  provide only summary statistics such as the sample mean and sample covariance matrix.  
     While the sample mean follows a Gaussian distribution,  the sample covariance follows (up to scaling) a  Wishart distribution, for which no thinning strategies have yet  been proposed.  In this note, we fill this gap:  we show that it is possible to generate two independent data matrices with independent $N_p(\mu, \Sigma)$ rows, based only on the sample mean and sample covariance matrix. These independent data matrices can either be used directly within a train-test paradigm, or can be used to derive independent summary statistics. Furthermore, they can be recombined to yield the original sample mean and sample covariance.
\end{abstract}

\section{Introduction}

Many modern data analysis pipelines rely on the ability to split a dataset into independent parts. For instance, one might wish to fit a  model to one part and validate it on another, or else to select a parameter of interest on one part and conduct inference on the other.  
In cases where we have access to $n$ independent and identically distributed observations, \emph{sample splitting}  provides a simple strategy to split our data into $K \leq n $ independent parts \citep{cox1975note}.
 However, in some settings, sample splitting is either inapplicable or unattractive. For instance, perhaps the $n$ observations are not independent or not identically distributed, or perhaps $n=1$. 

In a recent line of work, a number of authors have considered alternatives to sample splitting that involve splitting a \emph{single} (possibly matrix-valued) random variable into independent random variables, which can be recombined to yield the original random variable. We refer to such strategies, in aggregate, as \emph{data thinning}: see Definition 1 of \cite{dharamshi2023generalized}. 
\cite{robins2006adaptive}, \cite{tian2018selective}, \cite{rasines2021splitting}, and \cite{leiner2022data} show that it is possible to thin a $N_p(\mu,\Sigma)$ random vector with $\mu$ unknown and $\Sigma$ known. 
 \cite{neufeld2023data} extended this strategy to natural exponential families, such as the binomial family and the negative binomial family with known overdispersion parameter. \cite{dharamshi2023generalized} clarified the class of distributions that can be thinned, and showed that it extends far beyond natural exponential families, to examples such as the uniform and beta families.

 \cite{dharamshi2024decomposing} outlines the following possibilities
for thinning 
$n$ independent and identically distributed  $N_p(\mu, \Sigma)$ random variables:
\begin{list}{}{}
\item{\emph{Case 1: $\Sigma$ is known, $n \geq 1$.}} \cite{rasines2021splitting} and \cite{tian2018selective} provide a thinning strategy.
\item{\emph{Case 2: $\Sigma$ is unknown, $n > 1$. }} Proposition 4 of \cite{dharamshi2024decomposing} provides a thinning strategy.
\item{\emph{Case 3: $\Sigma$ is unknown, $n=1$.} } \cite{dharamshi2024decomposing} prove that if $p>1$,  thinning is not possible. 
\end{list}

In contrast to prior work, in this note we consider a situation in which we do not actually have access to the original sample of $N_p(\mu, \Sigma)$  random variables: that is, $Z_1,\ldots,Z_n \sim N_p(\mu, \Sigma)$ are unobserved. Instead, we only have access to the summary statistics, the sample mean $\bar Z_n$ and the sample covariance $S_n$: 
\begin{align}\label{eq:suff-unknown-mean}
    \bar Z_n = \frac{1}{n}\sum_{i=1}^n Z_i, \qquad S_n=\frac{1}{n-1}\sum_{i=1}^n(Z_i-\bar Z_n)(Z_i-\bar Z_n)^\top.
\end{align}

This may be the case for one of the following reasons:
\begin{enumerate}
    \item \emph{Privacy considerations may preclude the release of $Z_1,\ldots,Z_n$; however, $\bar Z_n$ and $S_n$ can be released}. For instance, in the context of genetic data, it is often not possible to share the raw data. Instead, summary statistics --- which are typically less personally identifiable --- of the data are shared. As one example, \cite{pasaniuc2017dissecting} discuss the release of a correlation matrix between genetic variants in cases where individual-level data cannot be shared due to privacy concerns. 
    \item \emph{The data analysis pipeline requires only the summary statistics,} and the data analyst does not have access to the original data $Z_1,\ldots, Z_n$. This may be due to scientific considerations: for example, in the context of neuroscience research, analyses often center on the $p \times p$ matrix of connectivity between voxels of the brain \citep{cohen2017computational}.  Or it might be due to statistical considerations: for example, the graphical lasso proposal \citep{friedman2008sparse} operates on the sample covariance matrix, not the matrix normal data matrix from which it arose. Or alternatively, perhaps the $p \times p$ matrix $S_n$ was measured directly, i.e. there is no $Z_1,\ldots,Z_n$, as in classical multidimensional scaling \citep{torgerson1952multidimensional}. 
    
\end{enumerate}

Classical results in multivariate statistics tell us that $\bar Z_n \sim N_p (\mu, \Sigma/n)$ and  $(n-1) S_n \sim \text{Wishart}_p(n -1, \Sigma)$, the $(p \times p)$-dimensional Wishart distribution with $n-1$ degrees of freedom (see Remark \ref{rem:swishart}). 
In this note, we develop a thinning strategy to create two or more independent random matrices with independent $N_p(\mu,\Sigma)$ rows from these summary statistics.  
  The key technical result making this possible is a procedure, which we introduce in Section~\ref{sec:proposal}, that is  originally due to \cite{lindqvist2005monte}. We go on to show how this result can be used to thin a Wishart distribution into two (or more) Wisharts, thereby adding a new entry into the list of natural exponential families where convolution-closed thinning \citep{neufeld2023data} is known to be possible. 

Henceforth, we will use the notation $N_{a \times b} (M, \Delta, \Gamma)$ to denote the matrix normal distribution with $a$ rows, $b$ columns, $a \times b$ mean matrix $M$, $a \times a$ row-covariance matrix $\Delta$, and $b \times b$ column-covariance matrix $\Gamma$. 
Moreover, we will use the notation $\text{Unif}(O_{k \times l})$ to indicate the uniform distribution on the set of orthogonal $k \times l$ matrices. This is  known as the Haar invariant distribution (on $O_{k\times l}$)  \citep{anderson2003introduction, muirhead2009aspects}. 

\begin{remark}
\label{rem:swishart}
When $n \le p$, $(n-1)S_n$ follows a \emph{singular} Wishart distribution \citep{srivastava2003singular}; the distinction between the singular and non-singular Wishart distributions is not important in what follows and thus we will use the word ``Wishart" throughout.
\end{remark}

\section{A matrix square root of a Wishart with independent Gaussian rows}\label{sec:proposal}

Given a rank-$r$ matrix $W\in\mathbb R^{p\times p}$, if the $n \times p$ matrix $A$ satisfies $A^\top A=W$, then we say that $A$ is a matrix square root of $W$. (Of course, it must be the case that   $n \geq r$.)
 The matrix square root is not unique. 
 For example, consider the eigenvalue decomposition $W=VD^2V^\top$, where $D$ is a $r \times r$ diagonal matrix and $V$ is a $p \times r$ orthogonal matrix: then  for any $r \times r$ orthogonal matrix $Q$, it follows that  $QDV^\top$ is a matrix square root of $W$. 

By definition, a Wishart random matrix $W$ has a matrix square root with rows that are independent and identically distributed multivariate Gaussians.  One might hope that any matrix square root of a Wishart matrix would have this property, but this is not the case (see, e.g. Section~\ref{sec:verification}).
To achieve this property, we present Algorithm~\ref{alg1}. Theorem~\ref{thm:root} that follows shows that this algorithm generates matrix square roots with independent and identically distributed Gaussian rows. 

\if\bmka1
\begin{algo}
\label{alg1} \textcolor{white}{.}\\
\indent \emph{Input:} a $p \times p$ positive semi-definite matrix $W$ of rank $r$ and an integer $n \geq r$. 
\begin{enumerate}
    \item Perform an eigenvalue decomposition: $W=VD^2V^\top$, where $D$ is a $r \times r$ diagonal matrix, and $V$ is an orthogonal matrix of dimension $p\times r$.
    \item Draw $Q\sim\text{Unif}(O_{n\times r})$, where $O_{n\times r}=\{Q\in\mathbb{R}^{n\times r}: Q^\top Q=I_r\}$. 
    \item Return the $n\times p$ matrix $X=QDV^\top$, with rows $X_1,\ldots,X_n \in \mathbb{R}^p$.
\end{enumerate}
\end{algo}
\else
\begin{mybox}[floatplacement=!h,label={alg1}]{ Decomposing a $p \times p$ positive semi-definite matrix $W$ of rank $r$ into an $n \times p$ matrix  $X$, for some $n \geq r$} 
\begin{enumerate}
    \item Perform an eigenvalue decomposition: $W=VD^2V^\top$, where $D$ is a $r \times r$ diagonal matrix, and $V$ is an orthogonal matrix of dimension $p\times r$.
    \item Draw $Q\sim\text{Unif}(O_{n\times r})$, where $O_{n\times r}=\{Q\in\mathbb{R}^{n\times r}: Q^\top Q=I_r\}$. 
    \item Return $X=QDV^\top$, with rows $X_1,\ldots,X_n \in \mathbb{R}^p$.
\end{enumerate}
\end{mybox}
\fi

\begin{theorem}[A square root of a  Wishart  with independent Gaussian rows]
\label{thm:root}

Suppose that we apply  Algorithm~\ref{alg1} to $(W, n)$, where $W\sim \text{Wishart}_{p}(n,\Sigma)$, to obtain an $n \times p$ matrix $X$. 
Then, $X^\top X = W$, and  the rows of $X$ are independent $N_p(0, \Sigma)$ random variables.
\end{theorem}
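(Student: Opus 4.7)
The identity $X^\top X = W$ is a one-line verification: $X^\top X = V D Q^\top Q D V^\top = V D^2 V^\top = W$, since $Q^\top Q = I_r$. The distributional claim is the substance of the theorem, and my plan is to prove it by recognizing Algorithm~\ref{alg1} as a sampler for the conditional distribution of an iid Gaussian matrix given its Gram matrix.

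Concretely, I would let $Y \in \mathbb{R}^{n \times p}$ be a matrix with independent $N_p(0, \Sigma)$ rows, so that $Y^\top Y \sim \text{Wishart}_p(n, \Sigma)$ by definition. Write the thin SVD $Y = U D V^\top$ with $U \in O_{n \times r}$, $D \in \mathbb{R}^{r \times r}$ diagonal, and $V \in O_{p \times r}$, so that $Y^\top Y = V D^2 V^\top$ has the form of the eigendecomposition in Step~1. The key observation is that the joint density of the rows of $Y$ depends on $Y$ only through $Y^\top Y$, hence $O Y \stackrel{d}{=} Y$ for every fixed $O \in O_{n \times n}$. Since the SVD transforms as $OY = (OU) D V^\top$, this invariance yields $(OU, D, V) \stackrel{d}{=} (U, D, V)$ for every such $O$. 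The conditional law of $U$ given $(D, V)$ is therefore invariant under the action of $O_{n \times n}$ on $O_{n \times r}$ by left multiplication; since that action is transitive (any frame of $r$ orthonormal vectors in $\mathbb{R}^n$ can be rotated to any other), uniqueness of the Haar measure forces $U \mid (D, V) \sim \text{Unif}(O_{n \times r})$, independently of $(D, V)$.

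It follows that the conditional distribution of $Y$ given $W = Y^\top Y$ can be sampled by forming the eigendecomposition $W = V D^2 V^\top$, independently drawing $Q \sim \text{Unif}(O_{n \times r})$, and returning $Q D V^\top$ --- which is exactly Algorithm~\ref{alg1}. Marginalizing over $W \sim \text{Wishart}_p(n, \Sigma)$ then recovers the unconditional distribution of $Y$, which has iid $N_p(0, \Sigma)$ rows, as claimed. The main subtlety I anticipate is that Step~1's eigendecomposition is not unique (sign flips in the columns of $V$, and reorderings in the case of repeated eigenvalues), so one must check that the output's distribution is well defined; this follows because $\text{Unif}(O_{n \times r})$ is invariant under right multiplication by any orthogonal $r \times r$ matrix, which absorbs such choices. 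A secondary care-point is the singular Wishart regime where $r < p$, but because the argument is phrased entirely in terms of the rank-$r$ SVD it goes through uniformly in this case as well.
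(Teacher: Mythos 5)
Your proof is correct, and it takes a genuinely different route from the paper's. The paper also reduces the problem to showing that, for $Z\sim N_{n\times p}(0,I_n,\Sigma)$ with thin SVD $Z=U(Z)D(Z)V(Z)^\top$, the factor $U(Z)$ is uniform on the Stiefel manifold and independent of $(D(Z),V(Z))$; but it establishes this by citing \cite{james1954normal} in the case $n\ge p$, and in the case $n<p$ by writing down the Jacobian of the map $Z\mapsto(U,D,V)$ and observing that the joint density factors as $f(U)\cdot f(D,V)$, with the $U$-marginal proportional to the invariant volume form $(U^\top dU)^\wedge$. You instead bypass the Jacobian entirely: you use the left-orthogonal invariance $OY\stackrel{d}{=}Y$ for every fixed $O\in O_{n\times n}$, push it through the SVD to get $(OU,D,V)\stackrel{d}{=}(U,D,V)$, and then invoke transitivity of the $O_{n\times n}$-action on $O_{n\times r}$ together with uniqueness of the invariant probability measure on a compact homogeneous space to conclude $U\mid(D,V)\sim\text{Unif}(O_{n\times r})$. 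This is more conceptual and treats $n\ge p$ and $n<p$ uniformly, whereas the paper's route is more explicit but needs the singular-matrix Jacobian machinery from \cite{srivastava2003singular} and \cite{svdjacobian}. You also framed the algorithm as a sampler from the conditional law of $Y$ given $Y^\top Y$ and then marginalized over $W$, which is a cleaner way of packaging the same conclusion, and you correctly flagged and resolved the two real care-points: well-definedness of the output despite non-uniqueness of the eigendecomposition (absorbed by right-invariance of the Haar measure), and the singular rank-$r<p$ regime. One small thing to make fully rigorous if you wrote this out in detail: you would want to fix a measurable tie-breaking convention for the SVD (e.g., sorted singular values and a sign convention on the columns of $V$ determined by $W$ alone) so that the identity $U(OY)=OU(Y)$ holds literally and not just up to equivalence; since $W$ has almost surely distinct nonzero eigenvalues, this is routine and does not change the argument.
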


The proof of Theorem \ref{thm:root} is given in Supplement \ref{sec:appendixA}.

In the next section, we will show that  Theorem~\ref{thm:root}  can be applied to thin the summary statistics of an unobserved sample of independent and identically distributed Gaussian vectors.

\section{Thinning the sample covariance}

We return now to the setting of this paper, where $Z_1,\ldots,Z_n\sim N_p(\mu,\Sigma)$ denote a sample of $n$ independent Gaussian vectors that are {\em unavailable} to the data analyst.

\subsection{The case where $\mu$ is known}

We first consider the case where $\mu$ is known, and the  analyst is provided with

\begin{align}\label{eq:suff-known-mean}
    \tilde S_n=\frac{1}{n}\sum_{i=1}^n(Z_i-\mu)(Z_i-\mu)^\top
\end{align}
along with the sample size $n$.  
The following corollary of Theorem~\ref{thm:root} enables us to thin $\tilde S_n$ into independent Wishart random matrices. 
\begin{corollary}[Thinning the sample covariance  of independent Gaussians with known mean]
\label{cor:knownmean}
Suppose that we apply Algorithm~\ref{alg1}  to $(n \tilde S_n, n)$  to obtain an $n \times p$ matrix $X$, where $\tilde S_n$  is defined in \eqref{eq:suff-known-mean} for $Z \sim N_{n \times p} (1_n \mu^\top; I_n, \Sigma)$. Then, (i) $X^\top X=n\tilde S_n$, and (ii) the rows of $X$ are independent $N_p(0, \Sigma)$ random variables. Furthermore, let $C_1,\ldots,C_K$ denote a partition of the integers $\{1,\ldots,n\}$ such that $C_k \cap C_{k'} = \emptyset$ for any $k \neq k'$ and $\cup_{k=1}^K C_k = \{1,\ldots,n\}$, and define
 $S^{(k)} := \frac{1}{|C_k|} \sum_{i \in C_k} X_i X_i^\top $ where $|C_k|$ is the number of elements in the set $C_k$. Then, (iii) $|C_k| S^{(k)} \sim \text{Wishart}_p(|C_k|, \Sigma)$ and $S^{(1)},\ldots,S^{(K)}$ are independent.
\end{corollary}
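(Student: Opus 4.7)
The plan is to reduce the entire statement to a direct application of Theorem~\ref{thm:root}, after first identifying the distribution of the input matrix. First, I would observe that, because $\mu$ is known, $n\tilde S_n = \sum_{i=1}^n (Z_i - \mu)(Z_i - \mu)^\top$ is a sum of $n$ independent outer products of $N_p(0,\Sigma)$ vectors, and hence by the definition of the Wishart distribution, $n\tilde S_n \sim \text{Wishart}_p(n, \Sigma)$. This exactly matches the hypothesis of Theorem~\ref{thm:root} with input pair $(W, n) = (n\tilde S_n, n)$, from which conclusions (i) $X^\top X = n\tilde S_n$ and (ii) the rows of $X$ are i.i.d.\ $N_p(0,\Sigma)$ follow immediately.

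For (iii), the plan is to exploit the fact that $X_1,\ldots,X_n$ are mutually independent from step (ii). Because the sets $C_1, \ldots, C_K$ are pairwise disjoint, the sub-collections $\{X_i : i \in C_k\}$ are mutually independent, and therefore any functions of them, in particular $S^{(1)}, \ldots, S^{(K)}$, are mutually independent as well. Within a single block, $|C_k|\,S^{(k)} = \sum_{i \in C_k} X_i X_i^\top$ is by construction a sum of $|C_k|$ i.i.d.\ $N_p(0,\Sigma)$ outer products, so by definition $|C_k|\,S^{(k)} \sim \text{Wishart}_p(|C_k|, \Sigma)$.

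The main (and really the only) subtlety is pinning down the correct degrees-of-freedom parameter in the first step. In the known-mean setting we have $n$ rather than $n-1$ degrees of freedom, which is precisely what makes it legitimate to feed $(n\tilde S_n, n)$ into Algorithm~\ref{alg1}; otherwise the conclusion of Theorem~\ref{thm:root} would not apply. Beyond this observation, the corollary is essentially a bookkeeping consequence of Theorem~\ref{thm:root} combined with the definition of the Wishart distribution and the elementary fact that disjoint sub-collections of independent random variables are independent.
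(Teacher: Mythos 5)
Your proof is correct and matches the paper's argument: both identify $n\tilde S_n \sim \text{Wishart}_p(n,\Sigma)$, apply Theorem~\ref{thm:root} to obtain (i) and (ii), and then derive (iii) from the row-independence of $X$ together with the definition of the Wishart distribution. You merely spell out the Wishart identification of $n\tilde S_n$ in slightly more detail than the paper does.
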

\begin{proof}
Noting that $n \tilde S_n \sim \text{Wishart}_p(n, \Sigma)$, (i) and (ii) follow immediately from Theorem~\ref{thm:root}. Furthermore, (iii) follows from the independence of the rows of $X$, as well as the definition of the Wishart distribution. 
\end{proof}

What is the point of Corollary~\ref{cor:knownmean}? Given the sample covariance matrix from a sample of $n$ independent $N_p(\mu, \Sigma)$ random vectors, we can obtain either (a) $K$ independent normal data matrices $X^{(k)}\sim N_{n_k\times p}(1_{n_k}\mu^T; I_{n_k},\Sigma)$, where $n_1+\dots+n_k=n$, or (b) $K$ independent sample covariance matrices corresponding to those data matrices.
We can use (a) in order to conduct a data analysis pipeline, such as cross-validation, that requires multiple independent data folds. We can use (b) if the data analysis pipeline specifically requires sample covariance matrices. In either case, the $K$ independent random variables obtained can be re-combined to yield the original sample covariance matrix.

\subsection{The case where $\mu$ is unknown}

We now turn to the case where the mean vector $\mu$ is unknown, and the data analyst is given access to $\bar Z_n$ and $S_n$ from \eqref{eq:suff-unknown-mean}, 
along with the sample size $n$.  The next result establishes that Algorithm~\ref{alg2}, a variant of Algorithm~\ref{alg1}, can be applied to thin $((n-1) S_n, \bar Z_n)$.  

\if\bmka1
\begin{algo}
\label{alg2} \textcolor{white}{.}\\
\indent \emph{Input:} a $p \times p$ positive semi-definite matrix $W$ of rank $r$, a $p$-vector $t$, and an integer $n > r$. 

\begin{enumerate}
    \item Perform an eigenvalue decomposition: $W=VD^2V^\top$, where $D$ is a $r \times r$ diagonal matrix, and $V$ is an orthogonal matrix of dimension $r \times p$.
    \item Draw $Q\sim\text{Unif}(O_{(n-1)\times r})$, where $O_{(n-1)\times r}=\{Q\in\mathbb{R}^{(n-1)\times r}: Q^\top Q=I_r\}$. 
    \item Return the $n\times p$ matrix $X=1_n t^\top + H\tilde X$, where $\tilde X=QDV^\top$ and $H\in\mathbb R^{n\times (n-1)}$ is an  orthogonal matrix such that $HH^\top=I_n-(1/n)1_n1_n^\top$. 
\end{enumerate}
\end{algo}
\else
\begin{mybox}[floatplacement=!h,label={alg2}]{ Decomposing a $p \times p$ positive semi-definite matrix $W$ of rank $r$, and a $p$-vector $t$,  into an $n \times p$ matrix $X$ for some $n>r$} 
\begin{enumerate}
    \item Perform an eigenvalue decomposition: $W=VD^2V^\top$, where $D$ is a $r \times r$ diagonal matrix, and $V$ is an orthogonal matrix of dimension $r \times p$.
    \item Draw $Q\sim\text{Unif}(O_{(n-1)\times r})$, where $O_{(n-1)\times r}=\{Q\in\mathbb{R}^{(n-1)\times r}: Q^\top Q=I_r\}$. 
    \item Return $X=1_n t^\top + H\tilde X$, where $\tilde X=QDV^\top$ and $H\in\mathbb R^{n\times (n-1)}$ is a non-random orthogonal matrix such that $HH^\top=I_n-(1/n)1_n1_n^\top$. 
\end{enumerate}
\end{mybox}
\fi

\begin{theorem}[Thinning the sample covariance and sample mean of independent Gaussians]
Suppose that we apply  Algorithm~\ref{alg2} to 
$( (n-1) S_n,\bar Z_n, n)$ to obtain an $n \times p$ matrix $X$, where $S_n$ and $\bar Z_n$ are defined in \eqref{eq:suff-unknown-mean} for  $Z\sim N_{n\times p}(1_n\mu^\top; I_n, \Sigma)$.  Then, (i)  
$X^\top (I_n - \frac{1}{n} 1_n 1_n^\top) X = (n-1)S_n$ and $\frac1{n} X^\top 1_n=\bar Z_n$, and (ii)  the rows
of $X$ are independent $N_p(\mu, \Sigma)$ random variables. Furthermore, let $C_1,\ldots,C_K$ denote a partition of the integers $\{1,\ldots,n\}$ such that $C_k \cap C_{k'} = \emptyset$ for any $k \neq k'$ and $\cup_{k=1}^K C_k = \{1,\ldots,n\}$, and define $\bar{X}^{(k)} := \frac{1}{|C_k|} \sum_{i \in C_k} X_i$ and 
 $S^{(k)} := \frac{1}{|C_k|-1} \sum_{i \in C_k} (X_i - \bar{X}^{(k)} ) (X_i - \bar{X}^{(k)}  ) ^\top $, 
  where $|C_k|$ is the number of elements in the set $C_k$. Then, (iii) $(|C_k|-1) S^{(k)} \sim \text{Wishart}_p(|C_k|-1, \Sigma)$, $\bar{X}^{(k)}  \sim N_p(\mu, \frac{1}{|C_k|} \Sigma)$,  and $\left(S^{(1)},\bar{X}^{(1)}\right),\ldots,\left(S^{(K)},\bar{X}^{(K)}\right)$   are independent.
\label{thm:unknownmean}
\end{theorem}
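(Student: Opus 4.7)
The plan is to verify (i)--(iii) in order, leveraging the orthogonal decomposition induced by $H$ and Theorem~\ref{thm:root} to pin down the distribution of $\tilde X = QDV^\top$.

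First, I would dispatch (i) by pure algebra. By construction $H^\top H = I_{n-1}$, and $H^\top 1_n = 0$ since $\|H^\top 1_n\|^2 = 1_n^\top H H^\top 1_n = 1_n^\top(I_n - \tfrac{1}{n} 1_n 1_n^\top) 1_n = 0$. With these in hand, the identities $\tfrac{1}{n} X^\top 1_n = \bar Z_n$ and $X^\top (I_n - \tfrac{1}{n} 1_n 1_n^\top) X = \tilde X^\top H^\top H \tilde X = \tilde X^\top \tilde X = (n-1) S_n$ drop out, where the last equality uses $\tilde X^\top \tilde X = V D Q^\top Q D V^\top = V D^2 V^\top = W$ as in Algorithm~\ref{alg1}.

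The heart of the argument is (ii). My strategy is to couple the construction of $X$ in Algorithm~\ref{alg2} to the classical orthogonal decomposition $Z = 1_n \bar Z_n^\top + H Y$ with $Y := H^\top Z$. A direct matrix-normal computation (using $H^\top 1_n = 0$ and $H^\top H = I_{n-1}$) gives $Y \sim N_{(n-1) \times p}(0; I_{n-1}, \Sigma)$ with $Y \perp \bar Z_n$; consequently $(n-1) S_n = Y^\top Y \sim \Wishart_p(n-1, \Sigma)$ and is independent of $\bar Z_n$. Theorem~\ref{thm:root} applied to $((n-1) S_n, n-1)$ then yields $\tilde X \sim N_{(n-1) \times p}(0; I_{n-1}, \Sigma)$ marginally, and since $\tilde X$ is a function of $S_n$ and the externally drawn $Q$, we also have $\tilde X \perp \bar Z_n$. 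Combining, $X = 1_n \bar Z_n^\top + H \tilde X$ is a sum of two independent matrix normals with common column covariance $\Sigma$ and row covariances $\tfrac{1}{n} 1_n 1_n^\top$ and $H H^\top = I_n - \tfrac{1}{n} 1_n 1_n^\top$, whence $X \sim N_{n \times p}(1_n \mu^\top; I_n, \Sigma)$, which is exactly the content of (ii).

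Part (iii) then falls out of classical Gaussian sampling theory: within each $C_k$ the rows of $X$ are i.i.d.\ $N_p(\mu, \Sigma)$, giving the stated laws of $\bar X^{(k)}$ and $(|C_k|-1) S^{(k)}$, and the pairs $(\bar X^{(k)}, S^{(k)})$ are mutually independent across $k$ because they are functions of disjoint collections of i.i.d.\ rows. The step I expect to carry the technical weight is the independence $\tilde X \perp \bar Z_n$: without it, the convolution producing row covariance $I_n$ collapses, and although (i) would still hold, the marginal law of the rows of $X$ would be wrong. This independence is exactly what the classical split of $\bar Z_n$ and $S_n$ into orthogonal directions via $H$, together with the external drawing of $Q$ in Algorithm~\ref{alg2}, is engineered to deliver.
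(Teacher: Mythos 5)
Your proposal is correct and follows essentially the same route as the paper: establish $\tilde X \sim N_{(n-1)\times p}(0;I_{n-1},\Sigma)$ by applying Theorem~\ref{thm:root} to $((n-1)S_n,\,n-1)$, use the classical independence $\bar Z_n \perp S_n$ together with the external draw of $Q$ to get $\tilde X \perp \bar Z_n$, and then combine to identify the law of $X$. The only cosmetic differences are that you deduce $X\sim N_{n\times p}(1_n\mu^\top;I_n,\Sigma)$ as a sum of two independent matrix normals with complementary row covariances $\tfrac1n 1_n1_n^\top$ and $HH^\top$, where the paper instead stacks $(\sqrt n\,\bar Z_n,\tilde X^\top)^\top$ and applies the linear map $\bigl(\tfrac1{\sqrt n}1_n\ \ H\bigr)$, and that you explicitly rederive $\bar Z_n \perp S_n$ via $Y=H^\top Z$ whereas the paper invokes it as known.
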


The proof of Theorem \ref{thm:unknownmean} is given in Supplement \ref{app:thm2}.

Theorem~\ref{thm:unknownmean} serves the same purpose as Corollary~\ref{cor:knownmean}, but operates in a context in which both $\mu$ and $\Sigma$ are unknown.  In this setting, one starts with a pair of sufficient statistics for the original unavailable sample, and produces $K$ independent pairs of these sufficient statistics.

We note that Algorithm~\ref{alg2} and Theorem~\ref{thm:unknownmean} are quite related to results in \cite{lindqvist2005monte}; however, their goals are not the same as ours.

\section{Numerical Results}
\subsection{Verification of Theorems~\ref{thm:root} and \ref{thm:unknownmean}}\label{sec:verification}

\begin{figure}
    \centering
    \includegraphics[width=1.0\linewidth]{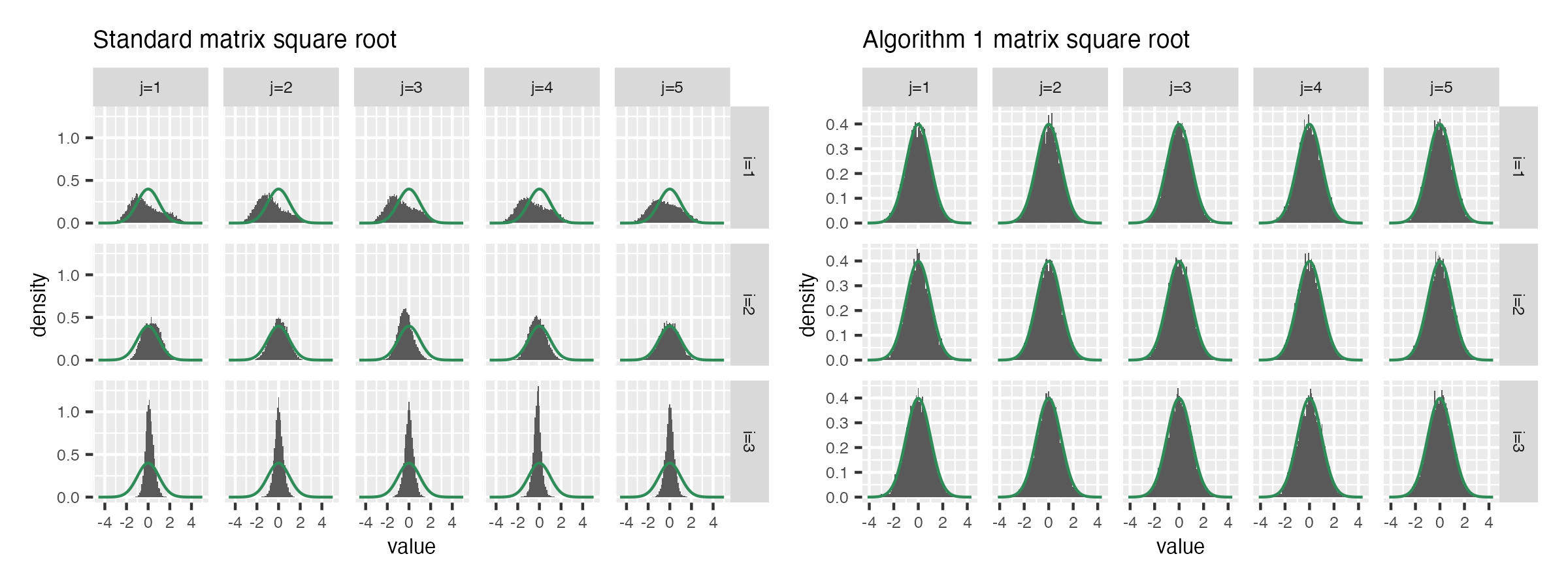}
    \caption{\em 
     For each of $10,000$ independent $\text{Wishart}_p(n, \Sigma)$ random matrices with $n=3$ and $p=5$, we generated two matrix square roots.  
    \emph{Left:} The elements of the matrix square root $DV^\top$ given by the eigendecomposition (see Step 1 of Algorithm~\ref{alg1}).  \emph{Right:} The elements of the matrix square root  given by  Step 3 of Algorithm~\ref{alg1}.  Only the latter yields a matrix square root for which the elements within the $j$th column follow a  $N(0, \Sigma_{jj})$ distribution (true distribution plotted in green), for $j = 1,\ldots,p$.}
    \label{fig:verification-alg1}
\end{figure}

Theorem~\ref{thm:root} establishes that applying Algorithm~\ref{alg1} to a Wishart matrix will generate a matrix square root whose rows are independent Gaussian random vectors.  In this section, we demonstrate in a  numerical example  that this is the case, and draw a contrast to another matrix square root that does not share this property. 

Setting $n=3$ and $p=5$, we first construct a $p \times p$ matrix $\Sigma$ with a Toeplitz structure, $\Sigma_{ij}=(1+|i-j|)^{-1}$, and draw $W\sim \text{Wishart}_p(n,\Sigma)$ by generating $Z\sim N_{n\times p}(\mathbf{0}_{n\times p},I_n,\Sigma)$ and then computing $W=Z^\top Z$. Let $V D^2 V^\top$ denote the eigendecomposition of $W$, and define $\breve{X}:= DV^\top$. Define $X$ to be the output of Step 3 of Algorithm~\ref{alg1} applied to $(W, n)$.  Figure~\ref{fig:verification-alg1} compares the entry-wise marginal distributions of $\breve X$ and $X$. 
 In particular, each panel contains an $n\times p$ array of histograms, the $(i,j)$th of which displays the distribution of $\breve{X}_{ij}$ (left) or $X_{ij}$ (right)  across 10,000 repetitions.   
  Superimposed on each histogram is the desired marginal distribution, $N(0,\Sigma_{jj})$.  We can see that the entries of $\breve{X}$ are far from normal, whereas the entries of $X$ have the correct marginals. 

In Appendix~\ref{app:numerical}, Figure~\ref{fig:verification-alg2} shows that when both $\Sigma$ and $\mu$ are unknown, each element $X_{ij}$ arising from Algorithm \ref{alg2} has the desired marginal distribution, $N(\mu_j,\Sigma_{jj})$.

\subsection{Application to post-selective inference in the graphical lasso}\label{sec:glasso}

The graphical lasso \citep{yuan2007model,banerjee2008model,rothman2008sparse,friedman2008sparse} estimator of the precision matrix $\Sigma^{-1}$ is 
\begin{equation}
\widehat\Omega_\lambda := \arg \min_{\Omega} \left\{ -\log \det \Omega + \text{trace} \left(\Omega  S_n  \right) + \lambda \| \Omega \|_1 \right\}, 
\label{eq:glasso}
\end{equation}
for $S_n$  defined in \eqref{eq:suff-unknown-mean}. 
Provided that $S_n$ arose from a sample of independent and identically distributed Gaussian random vectors, $\widehat\Omega_\lambda$ minimizes
 the negative log likelihood subject to an $\ell_1$ penalty on the elements of the precision matrix. Here, $\lambda$ is a nonnegative tuning parameter that determines the sparsity of $\widehat\Omega_\lambda$. In this section, we consider selecting $\lambda$ via cross-validation.

If we have access to the Gaussian random vectors $Z_1,\ldots,Z_n \sim N_p(\mu, \Sigma)$ used to compute $S_n$, then we can use sample splitting to instantiate a cross-validation scheme to select $\lambda$. In greater detail, let $C_1\ldots,C_K$ denote a partition of $\{1,\ldots,n\}$ such that $C_k \cap C_{k'} = \emptyset$ and $\cup_{k=1}^K C_k = \{1,\ldots,n\}$.
 Then, for $k=1,\ldots,K$, we define $S_{\text{SS}}^{(k)}=\frac{1}{|C_k|-1}\sum_{i\in C_k}(Z_i-\bar Z_{C_k})(Z_i-\bar Z_{C_k})^\top$ and $S_{\text{SS}}^{(-k)}=\frac{1}{n-|C_k|-1}\sum_{i\not\in C_k}(Z_i-\bar Z_{-C_k})(Z_i-\bar Z_{-C_k})^\top$ to be the sample covariance matrices computed on the observations in $C_k$ and on all but the observations in $C_k$, respectively (where $\bar Z_{C_k}$ and $\bar Z_{-C_k}$ are the corresponding sample means). We let $\widehat\Omega_{\lambda,\text{SS}}^{(-k)}$ denote the  graphical lasso estimator computed on $S_{\text{SS}}^{(-k)}$. 
 We select the value of $\lambda$ that minimizes 
$$
\ell_{\text{SS}}\left(\lambda\right)= \sum_{k=1}^K \left\{ -\log\det\widehat\Omega_{\lambda,\text{SS}}^{(-k)}+\trace\left(\widehat\Omega_{\lambda,\text{SS}}^{(-k)} S_{\text{SS}}^{(k)}\right) \right\}.
$$

Now, suppose that --- following the setup of this paper ---   we do not have access to $Z$ directly, but only to $S_n$ from \eqref{eq:suff-unknown-mean}. 
Consequently, cross-validation via sample splitting cannot be applied. 
 Instead, we apply  Algorithm~\ref{alg1} to $((n-1)S_n,n-1)$ to obtain an $(n-1) \times p$ matrix $X$; here, we use $n-1$ in place of $n$ because $S_n$ has rank $n-1$. By  Theorem~\ref{thm:root}, the rows of this  matrix are independent $N_p(0,\Sigma)$ random vectors. We then partition the indices $\{1,\ldots,n-1\}$  into $C_1,\ldots,C_K$, where $\cup_{k=1}^K C_k = \{1,\ldots,n-1\}$ and $C_k \cap C_{k'} = \emptyset$. We define  $S_{\text{DT}}^{(-k)} = \frac{1}{n-1-|C_k|} \sum_{i \notin C_k} X_i X_i^\top$
 and $S_{\text{DT}}^{(k)}=\frac{1}{|C_k|} \sum_{i \in C_k} X_i X_i^\top$. 
  Note that $|C_k| \cdot  S_{\text{DT}}^{(k)} \sim \text{Wishart}_p(|C_k|, \Sigma)$, and that   $S_{\text{DT}}^{(k)}$ and $S_{\text{DT}}^{(-k)}$ are independent.

 For $k=1,\ldots,K$, we let
 $\widehat\Omega_{\lambda,\text{DT}}^{(-k)}$ denote the graphical lasso estimator computed on $S_{\text{DT}}^{(-k)}$, with tuning parameter $\lambda$.
 We then select the value of $\lambda$ that minimizes 
$$
\ell_{\text{DT}}\left(\lambda\right)= \sum_{k=1}^K \left\{ -\log\det\widehat\Omega_{\lambda,\text{DT}}^{(-k)}+\trace\left(\widehat\Omega_{\lambda,\text{DT}}^{(-k)} S_{\text{DT}}^{(k)}\right) \right\}.
$$

\begin{figure}
    \centering
    \includegraphics[width=0.75\linewidth]{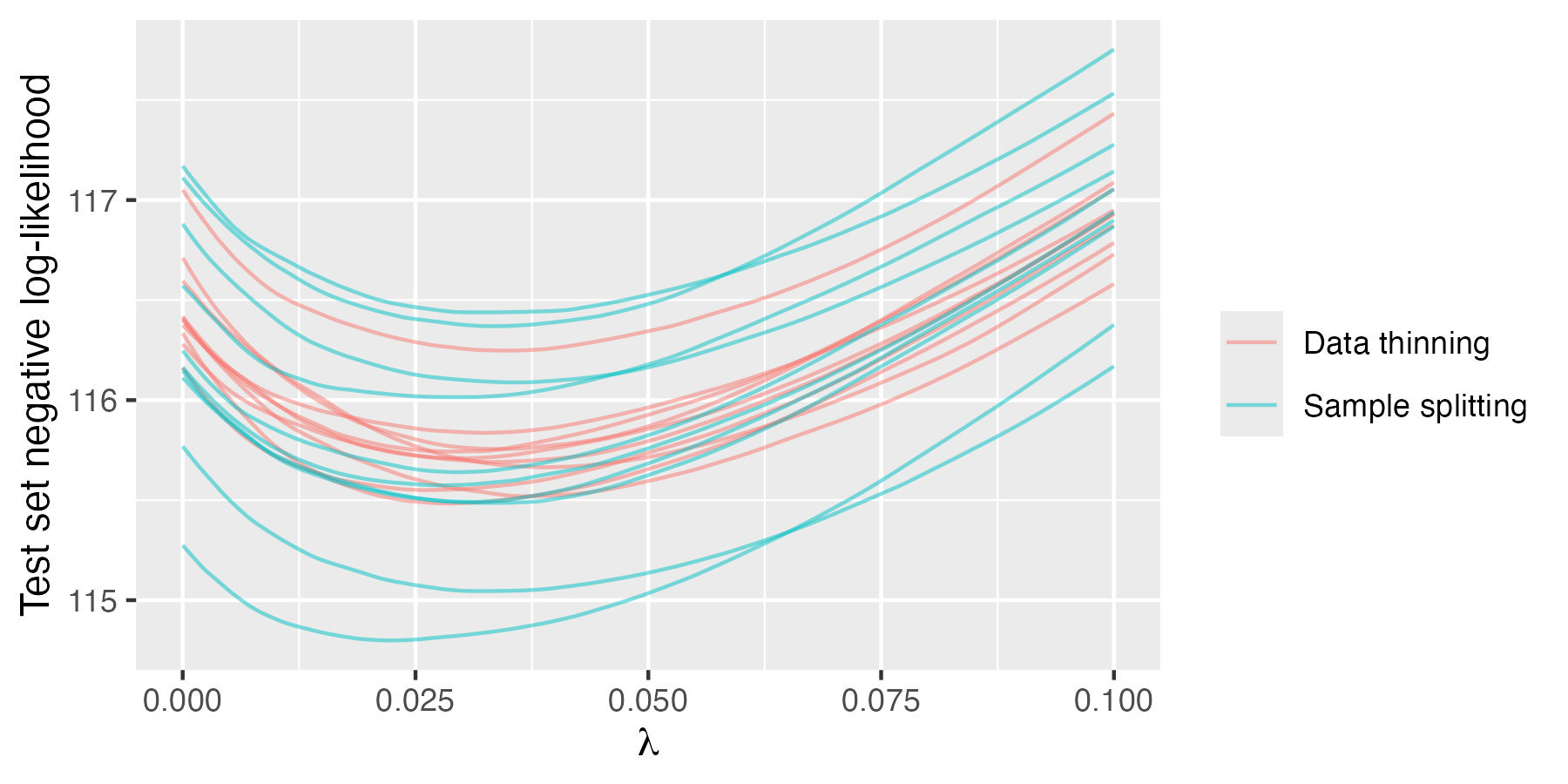}
    \caption{\em For the graphical lasso simulation described in Section~\ref{sec:glasso}, the figure displays ten realizations of  $\ell_{\text{SS}}(\lambda)$ and $\ell_{\text{DT}}(\lambda)$, the test set negative log-likelihoods for the sample splitting and data thinning approaches.  Computing $\ell_{\text{SS}}(\lambda)$ requires access to the Gaussian data matrix, whereas $\ell_{\text{DT}}(\lambda)$ requires access to only the sample covariance matrix. 
    }
    \label{fig:glasso}
\end{figure}

We now compare $\ell_{\text{SS}}(\lambda)$ and $\ell_{\text{DT}}(\lambda)$ in simulation.  We generate $n=250$ independent $N_p(\mu,\Sigma)$ random vectors where $p=10$, $\mu=0_{10}$, and $\Sigma^{-1}$ is block diagonal, with blocks $0.5\cdot I_4 + 0.5\cdot 1_41_4^\top$, $0.75\cdot I_4 + 0.25\cdot  1_41_4^\top$, and $I_2$. Figure~\ref{fig:glasso} displays $\ell_{\text{SS}}(\lambda)$ and $\ell_{\text{DT}}(\lambda)$ for $K=10$, for each of ten simulated datasets. We find that all curves are minimized when $\lambda \approx 0.025$. Therefore, data thinning selects the same tuning parameter as sample splitting, without requiring access to the individual-level data $Z_1,\ldots,Z_n$.

\section{Discussion} 

Arguments from \cite{neufeld2023data} and \cite{dharamshi2023generalized} suggest that it might be possible to 
thin a $\text{Wishart}_p(n; \Sigma)$ random matrix into $K$ independent Wishart random matrices $W^{(1)},\ldots,W^{(K)}$, with $W^{(k)} \mathop{\sim}\limits^{\mathrm{iid}} \text{Wishart}_p(n_k; \Sigma)$ for $n_1+ \ldots + n_k=n$, by sampling from the conditional distribution of $(W^{(1)},\ldots,W^{(K)})$ given $ \sum_{k=1}^K W^{(k)}$. Since $\sum_{k=1}^K W^{(k)}$ is sufficient for $\Sigma$, sampling from this conditional distribution does not require knowledge of $\Sigma$.  It turns out that this conditional distribution is closely related to the matrix variate Dirichlet distribution \citep{gupta2018matrix}. In fact, an alternative to the procedure described in Corollary \ref{cor:knownmean} can be obtained by sampling from this conditional distribution. 

Code to reproduce all numerical analyses in this note is available at \url{https://github.com/AmeerD/Wishart/}.

\bibliographystyle{agsm}
\bibliography{wishart}

\appendix

\section{Proof of Theorem \ref{thm:root}}
\label{sec:appendixA}

\begin{proof}

We start by noting that $X^\top X = (VDQ^\top )QDV^\top=VD^2V^\top= W$. It remains to show that $X=QDV^\top$ has the desired distribution.  
This will follow from some facts about the matrix normal.

Consider an $n\times p$ random matrix $Z\sim N_{n\times p}(\mathbf{0}_{n\times p},I_n,\Sigma)$, and denote its singular value decomposition as $Z=U(Z)D(Z)V(Z)^\top$.  By definition of the Wishart distribution, $Z^\top Z=V(Z)\left[ D(Z) \right] ^2V(Z)^\top$ has the same distribution as $W$.  Thus,
$D$ and $V$ from the eigenvalue decomposition of $W$ in Step 1 have the same joint  distribution as $D(Z)$ and $V(Z)$.  It remains to show  the following two claims:
 
\begin{list}{}{}
\item[Claim 1.] $U(Z)\perp (V(Z),D(Z))$; and 
\item[Claim 2.] $U(Z)$ is distributed uniformly on the $n\times r$ Stiefel manifold, where $r=\min(n,p)$.
\end{list}
Provided that these two claims hold, 
 $X=QDV^\top$ has the same distribution as $Z=U(Z) D(Z) V(Z)^\top$, and so  the proof is complete.

It remains to justify the two claims.
When $n \geq p$, both claims follow directly from \cite{james1954normal}. For $n<p$, we will show that the joint density of $\left(U(Z),D(Z),V(Z)\right)$ factors into the desired terms. Following the transformation $Z \to U(Z)D(Z)V(Z)^\top$, the joint density of $\left(U(Z),D(Z),V(Z)\right)$ simplifies as 
\begin{align*}
&f\left(U(Z),D(Z),V(Z)\right)\\
\propto &\exp(-\frac{1}{2}\trace[\Sigma^{-1}V(Z)D(Z)^2V(Z)^\top])|J|\\
=&\exp(-\frac{1}{2}\trace[\Sigma^{-1} V(Z)D(Z)^2V(Z)^\top])\prod_{i<j\le p}(d_i^2- d_j^2)|D(Z)|^{n-p}(d D(Z))(U(Z)^\top d U(Z))^\wedge(V(Z)^\top dV(Z))^\wedge
\end{align*} 
where $J$ indicates the Jacobian of the transformation, $d_i$ are the singular values, and $\wedge$ refers to the wedge product (see \citealt{svdjacobian} for details on the wedge product). For details on the derivation of the Jacobian, see \cite{srivastava2003singular} and \cite{svdjacobian}. Notice that $f\left(U(Z),D(Z),V(Z)\right)$ factors into $f\left(U(Z)\right)$ and $f\left(D(Z),V(Z)\right)$. This implies that $U(Z)$ is independent of $D(Z)$ and $V(Z)$. Further, the fact that $f\left(U(Z)\right)\propto(U(Z)^\top d U(Z))^\wedge$ implies that $U(Z)$ is uniformly distributed on the $n\times n$ Stiefel manifold \citep{anderson2003introduction, muirhead2009aspects}. Thus, both claims are proven when $n<p$.

\end{proof}

\section{Proof of Theorem \ref{thm:unknownmean}}
\label{app:thm2}

\begin{proof}
We begin by verifying (i): namely, that $X^\top (I_n - \frac{1}{n} 1_n 1_n^\top) X = (n-1)S_n$ and $\frac1{n} X^\top 1_n=\bar Z_n$.

First, recalling from Algorithm \ref{alg2} that $H\in\mathbb R^{n\times (n-1)}$ is an orthogonal matrix such that $HH^\top=I_n-(1/n)1_n1_n^\top$, note that $\|H^\top 1_n \|^2 = 1_n^\top H H^\top 1_n = 1_n^\top \left( I_n-(1/n)1_n1_n^\top \right) 1_n =0$. Therefore, $H^\top 1_n=0_{n-1}$.

Recalling the construction of $X$ from applying Algorithm \ref{alg2} with $(W,t)=((n-1)S_n,\bar Z_n)$, observe that $$\frac1{n} X^\top 1_n = \frac{1}{n} \left( \bar Z_n 1_n^\top +   VDQ^\top H^\top   \right) 1_n = \bar Z_n +   \frac{1}{n}    VDQ^\top  H^\top 1_n =\bar Z_n, $$
where the last equality follows from the fact that $H^\top 1_n=0_{n-1}$.
Furthermore,
$$
\left(I_n - \frac{1}{n} 1_n 1_n^\top \right) X = 
\left(I_n - \frac{1}{n} 1_n 1_n^\top \right) \left(1_n\bar Z_n^\top + HQDV^\top\right)
=   HQDV^\top
$$
since  $\left(I_n - \frac{1}{n} 1_n 1_n^\top \right) 1_n=0_{n}$ and $(I_n - \frac{1}{n} 1_n 1_n^\top)H=H$. Noting
 that $(I_n - \frac{1}{n} 1_n 1_n^\top)$ is idempotent, we have that $$X^\top \left(I_n - \frac{1}{n} 1_n 1_n^\top\right) X= VDQ^\top H^\top H Q DV^\top = VD^2 V^\top = (n-1)S_n,$$ where the 
second-to-last equality follows from the fact that $H^\top H = I_{n-1}$ and $Q^\top Q = I_{r}$, and the 
last equality follows from Step 1 of Algorithm~\ref{alg2}.

We will now establish (ii): namely, that $X$ has the desired distribution. First, note that $\bar{Z}_n \sim N_p(\mu, \Sigma/n)$. Next, observe that the $\tilde X$ generated in Step 3 of Algorithm \ref{alg2} is exactly the output of calling Algorithm~\ref{alg1} with $n-1$ in place of $n$ (this is allowed since Algorithm~\ref{alg2} requires $n>r$ whereas Algorithm~\ref{alg1} requires $n\ge r$).  Therefore, $\tilde X\sim N_{(n-1)\times p}(0;I_{n-1},\Sigma)$.  Recalling that $\bar{Z}_n \perp S_n$ and that $\tilde X$ depends only on $S_n$, we have that $\bar{Z}_n \perp \tilde X$.  Thus, $(\sqrt{n}\bar Z_n,\tilde X^\top)^\top\sim N_{n\times p}([\sqrt{n}\mu,0_{p\times (n-1)}]^\top;I_n,\Sigma)$. Writing $X=1_n\bar Z_n^\top + H\tilde X$ in matrix form, 
$$
X=\begin{pmatrix} \frac{1}{\sqrt{n}} 1_n  &  H \end{pmatrix} \begin{pmatrix}\sqrt{n}\bar Z_n^\top\\ \tilde X\end{pmatrix},
$$
establishes that $X$ is a linear transformation of a matrix normal and therefore is itself matrix normal, with mean $1_n\mu^\top$ and row and column covariance matrices $\begin{pmatrix} \frac{1}{\sqrt{n}} 1_n  &  H \end{pmatrix}\begin{pmatrix} \frac{1}{\sqrt{n}} 1_n  &  H \end{pmatrix}^\top=I_n$ and $\Sigma$, respectively.

It remains to establish (iii): namely, that $(|C_k|-1) S^{(k)} \sim \text{Wishart}_p(|C_k|-1, \Sigma)$ and $S^{(1)},\ldots,S^{(K)}$ are independent. The independence of $S^{(1)},\ldots,S^{(K)}$ follows immediately from the fact that the rows of $X$ are independent and $C_1,\ldots,C_K$ form a partition. To establish that $(|C_k|-1) S^{(k)} \sim \text{Wishart}_p(|C_k|-1, \Sigma)$, first observe that $(|C_k|-1) S^{(k)} = \sum_{i \in C_k} (X_i - \bar{X}^{(k)} ) (X_i - \bar{X}^{(k)}  ) ^\top = \left(X^{(k)}\right)^\top\left(I_{|C_k|}-\frac{1}{|C_k|}1_{|C_k|}1_{|C_k|}^\top\right)X^{(k)}$, where $X^{(k)}$ is the $|C_k|\times p$ submatrix of $X$ containing the rows of $X$ corresponding to $C_k$.  Furthermore, define $H^{(k)}$ to be a $|C_k| \times (|C_k| -1)$   orthogonal matrix with $H^{(k)}\left(H^{(k)}\right)^\top=I_{|C_k|}-\frac{1}{|C_k|}1_{|C_k|}1_{|C_k|}^\top$. Because $(H^{(k)})^\top X^{(k)} \sim N_{(|C_k|-1) \times p} ( 0; I_{|C_k|-1}; \Sigma)$, it follows that 
$\left(X^{(k)}\right)^\top\left(I_{|C_k|}-\frac{1}{|C_k|}1_{|C_k|}1_{|C_k|}^\top\right)X^{(k)}
= \left(H^{(k)}X^{(k)}\right)^\top\left(H^{(k)}X^{(k)}\right)$ is $\text{Wishart}_p(|C_k|-1, \Sigma)$.
\end{proof}

\section{Additional numerical experiments}
\label{app:numerical}

We conduct a second simulation study similar to that of Section \ref{sec:verification}, though with unknown $\mu$. Again, we set $n=3$ and $p=5$, and construct a length $p$ vector $\mu$ such that the $j$th entry $\mu_j=j$, and a $p \times p$ matrix $\Sigma$ with a Toeplitz structure, $\Sigma_{ij}=(1+|i-j|)^{-1}$. We then generate $Z\sim N_{n\times p}(\mathbf{0}_{n\times p},I_n,\Sigma)$, and compute $\bar Z_n = \frac{1}{n}Z^\top 1_n\sim N_p(\mu,\Sigma/n)$ and $W=Z^\top\left(I_n-\frac{1}{n}1_n1_n^\top\right) Z\sim \text{Wishart}_p(n-1,\Sigma)$. Let $X$ be the output of Step 3 of Algorithm~\ref{alg2} applied to $(W, \bar{Z}_n, n)$.  Figure~\ref{fig:verification-alg2} displays the marginal distributions of the elements of $X$. Each panel contains an $n\times p$ array of histograms, the $(i,j)$th of which displays the distribution of $X_{ij}$ across 10,000 repetitions. Superimposed on each histogram is the desired marginal distribution, $N(\mu_j,\Sigma_{jj})$.  We can see that the entries of $X$ have the correct marginals, thereby numerically verifying Theorem \ref{thm:unknownmean}.

\begin{figure}
    \centering
    \includegraphics[width=0.5\linewidth]{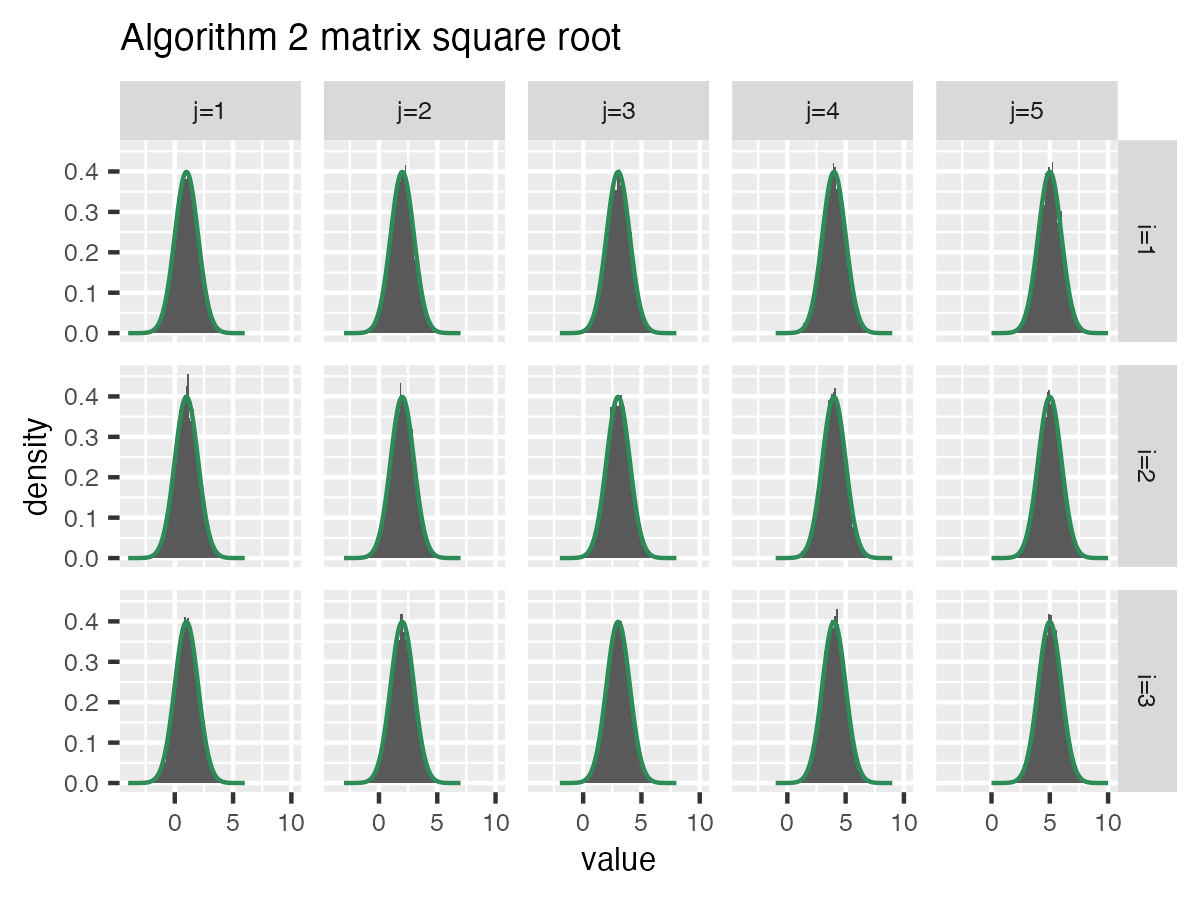}
    \caption{\em 
     For each of $10,000$ independent $N_p(\mu,\Sigma/n)$ random vector and $\text{Wishart}_p(n, \Sigma)$ random matrix pairs with $n=3$ and $p=5$, we generated the  matrix square root using Algorithm \ref{alg2}.  Each panel displays an element of the matrix square root  given by  Step 3 of Algorithm~\ref{alg2}.  The elements within the $j$th column follow a  $N(\mu_j, \Sigma_{jj})$ distribution (true distribution plotted in green), for $j = 1,\ldots,p$, in keeping with Theorem \ref{thm:unknownmean}.}
    \label{fig:verification-alg2}
\end{figure}

\end{document}